\DeclareMathOperator{\diag}{diag}         % Define diag operator
\def\rbb{\mathbb{R}}
\def\trp{^T}
\def\diag{{\rm diag}}
\def\half{\frac{1}{2}}
\newtheorem{theorem}{Theorem}
\newtheorem{lemma}{Lemma}
\newtheorem{definition}{Definition}
\newtheorem{remark}{Remark}
\title{\LARGE \bf Time Averaged Consensus in a Direct Coupled Distributed Coherent Quantum Observer}
\author{Ian R.~Petersen %
\thanks{This work was supported by the
Australian Research Council (ARC) and the Air Force Office of Scientific
Research (AFOSR). This material is based on research sponsored by the
Air Force Research Laboratory, under agreement number FA2386-12-1-4075.  The U.S. Government is authorized to reproduce and
distribute reprints for Governmental purposes notwithstanding any
copyright notation thereon.
The views and conclusions contained herein are those of the authors
and should not be interpreted as necessarily representing the official
policies or endorsements, either expressed or implied, of the Air
Force Research Laboratory or the U.S. Government. }%
\thanks{Ian R. Petersen is with the School of  Engineering and Information Technology, 
        University of New South Wales at the Australian Defence Force Academy, Canberra ACT 2600, Australia.
         {\tt\small i.r.petersen@gmail.com} } 
}%
\begin{document}

\maketitle
\thispagestyle{empty}
\pagestyle{empty}

\begin{abstract}
This paper considers the problem of constructing a distributed direct coupling quantum observer for a closed linear quantum system. The proposed distributed observer consists of a network of quantum harmonic oscillators and it is shown that the distributed observer converges to a consensus in a time averaged sense in which each component of the observer estimates the specified output of the quantum plant.  An example and simulations are included to illustrate the properties of the distributed observer.
\end{abstract}

%%%%%%%%%%%%%%%%%%%%%%%%%%%%%%%%%%%%%%%%%%%%%%%%%%%%%%%%%%%%%%%%%%%%%%%%%%%%%%%%
\section{Introduction} \label{sec:intro}
In recent years, there has been significant interest in controlling networks of multi-agent systems to achieve a consensus among the agents; e.g., see \cite{LZHD14,SJ13,ME10,XB05,JLM03}. In particular, some authors have looked at the problem of consensus in distributed estimation problems; e.g., see \cite{OS05,OS09}. Furthermore, issues of consensus have been considered in networked quantum systems; see \cite{SSR10,MST13,MTS13,TMS14,SDPJ1a}. This work is motivated by the fact that it is becoming increasingly possible for quantum control experiments to involve the networked interconnection of many quantum elements and these quantum networks will have important applications in problems such as quantum communication and quantum information processing. Also, many macroscopic systems can be regarded as consisting of a large quantum network. 

In this paper, we build on the papers \cite{PET14Aa,PET14Ba} which considered the problem of constructing a direct coupling quantum observer for a given quantum system. The problem of constructing an observer for a linear quantum system has been considered in a number of recent papers; e.g, see \cite{MJ12a,VP9a}. The control of linear quantum systems has been of considerable interest in recent years; e.g., see \cite{JNP1,NJP1,ShP5}.
Such linear quantum systems commonly arise in the area of quantum optics; e.g., see
\cite{GZ00,BR04}. For such  system models, an important class of  control problems are  coherent
quantum feedback control problems; e.g., see \cite{JNP1,NJP1,MaP3,MAB08,ZJ11,VP4,VP5a,HM12}. In these  control problems, both the plant and the controller are quantum systems and the controller is designed to optimize some performance index. The coherent quantum observer problem can be regarded as a special case of the coherent
quantum feedback control problem in which the objective of the observer is to estimate the system variables of the quantum plant. The papers \cite{PET14Aa,PET14Ba} considered a direct coupling coherent observer problem in which the observer is directly coupled to the plant and not coupled via a field as in previous papers. This leads the papers \cite{PET14Aa,PET14Ba} to consider a notion of time-averaged convergence for the observers. 

In this paper, we extend the results of \cite{PET14Aa} to consider a direct coupled distributed quantum observer which is constructed via the direct connection of many quantum harmonic oscillators. We show that this quantum network can be constructed so that each output of the direct coupled distributed quantum observer converges to the plant output of interest in a time averaged sense. This is a form of time averaged quantum consensus for the quantum networks under consideration.

\section{Quantum  Systems}
In the distributed quantum observer problem under consideration, both the quantum plant and the distributed quantum observer are linear quantum systems; see also \cite{JNP1,GJ09,ZJ11}. We will restrict attention to closed linear quantum systems which do not interact with an external environment. 
The quantum mechanical behavior of a linear quantum system is described in terms of the system \emph{observables} which are self-adjoint operators on an underlying infinite dimensional complex Hilbert space $\mathfrak{H}$.   The commutator of two scalar operators $x$ and $y$ on ${\mathfrak{H}}$ is  defined as $[x, y] = xy - yx$.~Also, for a  vector of operators $x$ on ${\mathfrak H}$, the commutator of ${x}$ and a scalar operator $y$ on ${\mathfrak{H}}$ is the  vector of operators $[{x},y] = {x} y - y {x}$, and the commutator of ${x}$ and its adjoint ${x}^\dagger$ is the  matrix of operators 
\[ [{x},{x}^\dagger] \triangleq {x} {x}^\dagger - ({x}^\# {x}^T)^T, \]
where ${x}^\# \triangleq (x_1^\ast\; x_2^\ast \;\cdots\; x_n^\ast )^T$ and $^\ast$ denotes the operator adjoint. 

The dynamics of the closed linear quantum systems under consideration are described by non-commutative differential equations of the form
\begin{eqnarray}
\dot x(t) &=& Ax(t); \quad x(0)=x_0
 \label{quantum_system}
\end{eqnarray}
where $A$ is a real matrix in $\rbb^{n
\times n}$, and $ x(t) = [\begin{array}{ccc} x_1(t) & \ldots &
x_n(t)
\end{array}]\trp$ is a vector of system observables; e.g., see \cite{JNP1}. Here $n$ is assumed to be an even number and $\frac{n}{2}$ is the number of modes in the quantum system. 

The initial system variables $x(0)=x_0$ 
are assumed to satisfy the {\em commutation relations}
\begin{equation}
[x_j(0), x_k(0) ] = 2 i \Theta_{jk}, \ \ j,k = 1, \ldots, n,
\label{x-ccr}
\end{equation}
where $\Theta$ is a real skew-symmetric matrix with components
$\Theta_{jk}$.  In the case of a
single quantum harmonic oscillator, we will choose $x=(x_1, x_2)^T$ where
$x_1=q$ is the position operator, and $x_2=p$ is the momentum
operator.  The
commutation relations are  $[q,p]=2 i$.
In general, the matrix $\Theta$ is assumed to be  of the  form
\begin{equation}
\label{Theta}
\Theta=\diag(J,J,\ldots,J)
\end{equation}
 where $J$ denotes the real skew-symmetric $2\times 2$ matrix
$$
J= \left[ \begin{array}{cc} 0 & 1 \\ -1 & 0
\end{array} \right].$$

The system dynamics (\ref{quantum_system}) are determined by the system Hamiltonian which is a 
which is a self-adjoint operator on the underlying  Hilbert space  $\mathfrak{H}$. For the linear quantum systems under consideration, the system Hamiltonian will be a
quadratic form
$\mathcal{H}=\half x(0)\trp R x(0)$, where $R$ is a real symmetric matrix. Then, the corresponding matrix $A$ in 
(\ref{quantum_system}) is given by 
\begin{equation}
A=2\Theta R \label{eq_coef_cond_A}.
\end{equation}
 where $\Theta$ is defined as in (\ref{Theta}).
e.g., see \cite{JNP1}.
In this case, the  system variables $x(t)$ 
will satisfy the {\em commutation relations} at all times:
\begin{equation}
\label{CCR}
[x(t),x(t)^T]=2{\pmb i}\Theta \ \mbox{for all } t\geq 0.
\end{equation}
That is, the system will be \emph{physically realizable}; e.g., see \cite{JNP1}.

\begin{remark}
\label{R1}
Note that that the Hamiltonian $\mathcal{H}$ is preserved in time for the system (\ref{quantum_system}). Indeed,
$ \mathcal{\dot H} = \frac{1}{2}\dot{x}^TRx+\frac{1}{2}x^TR\dot{x} = -x^TR\Theta R x + x^TR\Theta R x = 0$ since $R$ is symmetric and $\Theta$ is skew-symmetric.
\end{remark}

\section{Direct Coupling Distributed Coherent Quantum Observers}
In our proposed direct coupling coherent quantum observer, the quantum plant is a single quantum harmonic oscillator which is a linear quantum system of the form (\ref{quantum_system}) described by the non-commutative differential equation
\begin{eqnarray}
\dot x_p(t) 
&=& A_px_p(t); \quad x_p(0)=x_{0p}; \nonumber \\
z_p(t) &=& C_px_p(t)
 \label{plant}
\end{eqnarray}
where $z_p(t)$ denotes the vector of system variables to be estimated by the observer and $ A_p \in \rbb^{2
\times 2}$, $C_p\in \rbb^{1 \times 2}$. 
It is assumed that this quantum plant corresponds to a plant Hamiltonian
$\mathcal{H}_p=\half x_p(0)\trp R_p x_p(0)$. Here $x_p = \left[\begin{array}{l}q_p\\p_p\end{array}\right]$ where
$q_p$ is the plant position operator and $p_p$ is the plant momentum operator. 

We now describe the linear quantum system of the form (\ref{quantum_system}) which will correspond to the distributed quantum observer; see also \cite{JNP1,GJ09,ZJ11}. 
This system is described by a non-commutative differential equation of the form
\begin{eqnarray}
\dot x_o(t) &=& A_ox_o(t);\quad x_o(0)=x_{0o};\nonumber \\
z_o(t) &=& C_ox_o(t)
 \label{observer}
\end{eqnarray}
where the observer output $z_o(t)$ is the distributed observer estimate vector and $ A_p \in \rbb^{n_o
\times n_o}$, $C_o\in \rbb^{\frac{n_o}{2} \times n_o}$.  Also,  $x_o(t)$  is a vector of self-adjoint 
non-commutative system variables; e.g., see \cite{JNP1}. We assume the distributed observer order $n_o$  is an even number with $N=\frac{n_o}{2}$ being the number of elements in the distributed quantum observer. We also assume that the plant variables commute with the observer variables. The system dynamics (\ref{observer}) are determined by the observer system Hamiltonian which is a 
which is a self-adjoint operator on the underlying  Hilbert space for the observer. For the distributed quantum observer under consideration, this Hamiltonian is given by a 
quadratic form:
$\mathcal{H}_o=\half x_o(0)\trp R_o x_o(0)$, where $R_o$ is a real symmetric matrix. Then, the corresponding matrix $A_o$ in 
(\ref{observer}) is given by 
\begin{equation}
A_o=2\Theta R_o \label{eq_coef_cond_Ao}
\end{equation}
 where $\Theta$ is defined as in (\ref{Theta}). Furthermore, we will assume that the distributed quantum observer has a chain structure and is coupled to the quantum plant as shown in Figure \ref{F3}. 
\begin{figure}[htbp]
\begin{center}
\includegraphics[width=8cm]{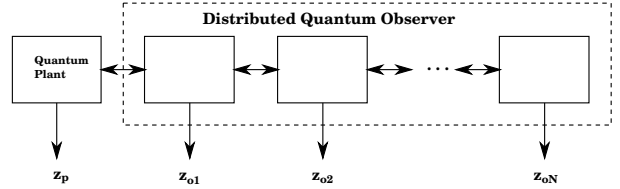}
\end{center}
\caption{Distributed Quantum Observer.}
\label{F3}
\end{figure}
This corresponds to an observer Hamiltonian of the form
\begin{eqnarray*}
\mathcal{H}_o&=&\half x_o(0)\trp R_o x_o(0) \nonumber \\
&=& \half\sum_{i=1}^{N}x_{oi}(0)\trp R_{oi} x_{oi}(0)\nonumber \\
&&+\sum_{i=1}^{N-1}x_{oi}(0)\trp R_{c(i+1)} x_{o(i+1)}(0)
\end{eqnarray*}
where the vector of observer system variables $x_o$ is partitioned according to each element of the distributed quantum observer as follows
\[
x_o = \left[\begin{array}{l}x_{o1}\\x_{o2}\\\vdots\\x_{oN}\end{array}\right].
\]
We assume that the variables for each element of the distributed quantum observer commute with the variables of all other elements of the distributed quantum observer; i.e., 
\[
[x_{oi},x_{oj}^T]=0 ~~ \forall ~~i\neq j.
\]
Here, $x_{oi} = \left[\begin{array}{l}q_{oi}\\p_{oi}\end{array}\right]$ for $i=1,2,\ldots,N$ where
$q_{oi}$ is the  position operator for the $i$th observer element  and $p_{oi}$ is the  momentum operator for the $i$th observer element. 

In addition, we define a coupling Hamiltonian which defines the coupling between the quantum plant and the first element of the distributed quantum observer:
\[
\mathcal{H}_c = x_{p}(0)\trp R_{c1} x_{o1}(0).
\]
Furthermore, we write
\[
z_o = \left[\begin{array}{l}z_{o1}\\z_{o2}\\\vdots\\z_{oN}\end{array}\right]
\]
where 
\[
z_{oi} = C_{oi}x_{oi} \mbox{ for }i=1,2,\ldots,N.
\]
Note that $R_{oi} \in \rbb^{2 \times 2}$, $R_{ci} \in \rbb^{2 \times 2}$, $C_{oi} \in \rbb^{1 \times 2}$, and each matrix $R_{oi}$ is symmetric for $i=1,2,\ldots,N$.

The augmented quantum linear system consisting of the quantum plant and the distributed  quantum observer is then a quantum system of the form (\ref{quantum_system}) described by the total Hamiltonian
\begin{eqnarray}
\mathcal{H}_a &=& \mathcal{H}_p+\mathcal{H}_c+\mathcal{H}_o\nonumber \\
 &=& \half x_a(0)\trp R_a x_a(0)
\label{total_hamiltonian}
\end{eqnarray}
where
\begin{equation}
\label{xaRa}
x_a = \left[\begin{array}{l}x_p\\x_{o1}\\x_{o2}\\\vdots\\x_{oN}\end{array}\right],~~
R_a = \left[\begin{array}{lllll}R_p & R_{c1} & & &\\
R_{c1}^T & R_{o1} & R_{c2} &  & 0 \\
& R_{c2}^T & R_{o2} & \ddots &\\
0 & & \ddots & \ddots  & R_{cN}\\
&&& R_{cN}^T & R_{oN}
\end{array}\right].
\end{equation}
 Then, using (\ref{eq_coef_cond_A}), it follows that the augmented quantum linear system is described by the equations
\begin{eqnarray}
\left[\begin{array}{l}\dot x_p(t)\\\dot x_{o1}(t)\\\dot x_{o2}(t)\\\vdots\\\dot x_{oN}(t)\end{array}\right] &=& 
A_a\left[\begin{array}{l} x_p(t)\\ x_{o1}(t)\\x_{o2}(t)\\\vdots\\x_{oN}(t)\end{array}\right];\nonumber \\
z_p(t) &=& C_px_p(t);\nonumber \\
z_o(t) &=& C_ox_o(t)
\label{augmented_system}
\end{eqnarray}
where $A_a = 2\Theta R_a$ and 
\[
C_o = \left[\begin{array}{llll}C_{o1} & & &\\
                               & C_{o2}& 0 &\\
                               & 0 & \ddots & \\
                               &&& C_{oN}
\end{array}\right].
\]

We now formally define the notion of a direct coupled linear quantum observer.

\begin{definition}
\label{D1}
The matrices $R_{o1}$, $R_{o2}$, $\ldots$, $R_{oN}$, $R_{c1}$, $R_{c2}$, $\ldots$, $R_{cN}$,  $C_{o1}$, $C_{o2}$, $\ldots$, $C_{oN}$ define a {\em distributed linear quantum observer} achieving time-averaged consensus convergence for the quantum linear plant (\ref{plant}) if the corresponding augmented linear quantum system (\ref{augmented_system}) is such that
\begin{equation}
\label{average_convergence}
\lim_{T \rightarrow \infty} \frac{1}{T}\int_{0}^{T}(\left[\begin{array}{l}1\\1\\\vdots\\1\end{array}\right]z_p(t) 
- z_o(t))dt = 0.
\end{equation}
\end{definition}

\begin{remark}
\label{R2}
Note that the above definition requires that the time average of each observer element output $z_{oi}(t)$ converges to the time average of the plant output $z_p(t)$. That is, an averaged consensus is reached by the observer element outputs. 
\end{remark}
\section{Constructing a Direct Coupling Distributed Coherent Quantum Observer}
We now describe the construction of a distributed linear quantum observer.  In this section, we assume that  $A_p =0$ in (\ref{plant}). This corresponds to $R_p = 0$ in the plant Hamiltonian. It follows from (\ref{plant}) that the plant system variables $x_p(t)$ will remain fixed if the plant is not coupled to the observer. However, when the plant is coupled to the quantum observer this will no longer be the case. We will show that if the distributed quantum observer is suitably designed, the plant quantity to be estimated  $z_p(t)$ will remain fixed and the condition (\ref{average_convergence}) will be satisfied. 

We assume that the matrices $R_{ci}$, $R_{oi}$ are of the form  
\begin{equation}
\label{RciRoi}
R_{ci} = \alpha_i\beta_i^T,~~R_{oi} = \omega_iI
\end{equation}
where $\alpha_i  \in \rbb^{2}$, $\beta_i \in \rbb^{2}$ and $\omega_i > 0$  for $i=1,2,\ldots,N$. Also, we assume that $\alpha_1 =  C_p^T$. 

We will show that these assumptions imply that the quantity $z_p(t) = C_px_p(t)$ will be constant for the augmented quantum system (\ref{augmented_system}). Indeed, it follows from (\ref{augmented_system}), (\ref{xaRa}), (\ref{Theta}) that
\[
\dot x_p(t) = 2JR_{c1}x_{o1}(t) = 2J\alpha_1\beta_1^Tx_{o1}(t).
\]
Hence, 
\[
\dot z_p(t) = 2C_pJ\alpha_1\beta_1^Tx_{o1}(t)= 2\alpha_1^TJ\alpha_1\beta_1^Tx_{o1}(t) = 0
\]
since $J$ skew-symmetric implies $\alpha_1^TJ\alpha_1= 0$. Therefore, 
\begin{equation}
\label{zp_const}
z_p(t) = z_p(0) = z_p
\end{equation}
for all $t\geq 0$. 

It now follows from (\ref{augmented_system}) that we can write
\begin{eqnarray*}
\left[\begin{array}{l}\dot x_{o1}(t)\\\dot x_{o2}(t)\\\vdots\\\dot x_{oN}(t)\end{array}\right] &=& 
A_o\left[\begin{array}{l}x_{o1}(t)\\x_{o2}(t)\\\vdots\\x_{oN}(t)\end{array}\right]
+2\left[\begin{array}{l}J\beta_1\alpha_1^T\\0\\\vdots\\0\end{array}\right]x_p(t);\nonumber \\
&=& A_o\left[\begin{array}{l}x_{o1}(t)\\x_{o2}(t)\\\vdots\\x_{oN}(t)\end{array}\right]
+B_oz_p
\end{eqnarray*}
where
\begin{small}
\[
A_o = 2\left[\begin{array}{lllll}\omega_1J & J\alpha_2 \beta_2^T & & &\\
J \beta_2\alpha_2^T & \omega_2J & J\alpha_3 \beta_3^T&  & 0 \\
& J \beta_3\alpha_3^T & \omega_3J & \ddots &\\
0 & & \ddots & \ddots  & J\alpha_N \beta_N^T\\
&&& J \beta_N\alpha_N^T & \omega_NJ
\end{array}\right]
\]
\end{small}
and
\[
B_o = \left[\begin{array}{l}2J\beta_1\\0\\\vdots\\0\end{array}\right].
\]

To construct a suitable distributed quantum observer, we will further assume that 
\begin{equation}
\label{alphabeta}
\alpha_i = \alpha = C_p^T, ~~\beta_i = -\mu_i \alpha \mbox{ and }C_{oi} = C_p= \alpha^T
\end{equation}
for all $i=1,2,\ldots,N$ where each $\mu_i > 0$. In order to construct suitable values for the quantities $\mu_i$ and $\omega_i$, we require that 
\begin{equation}
\label{xtilde}
A_o\left[\begin{array}{l}\alpha\\\alpha\\\vdots\\\alpha\end{array}\right]+B_o\|\alpha\|^2 = 0.
\end{equation}
This will ensure that the quantity
\begin{equation}
\label{tildexo}
\tilde x_o = x_o - \frac{1}{\|\alpha\|^2}\left[\begin{array}{l}\alpha\\\alpha\\\vdots\\\alpha\end{array}\right]z_p
\end{equation}
 will satisfy the non-commutative differential equation
\begin{equation}
\label{xtildedot}
\dot{\tilde x}_o = A_o \tilde x_o.
\end{equation}
This, combined with the fact that
\begin{eqnarray}
\label{Coxtilde}
\lefteqn{C_o\frac{1}{\|\alpha\|^2}\left[\begin{array}{l}\alpha\\\alpha\\\vdots\\\alpha\end{array}\right]z_p}\nonumber \\
&=& \frac{1}{\|\alpha\|^2}\left[\begin{array}{llll}\alpha^T & & &\\
                               & \alpha^T & 0 &\\
                               & 0 & \ddots & \\
                               &&& \alpha^T
\end{array}\right]\left[\begin{array}{l}\alpha\\\alpha\\\vdots\\\alpha\end{array}\right]z_p \nonumber \\
&=& \left[\begin{array}{l}1\\1\\\vdots\\1\end{array}\right]z_p
\end{eqnarray}
will be used in establishing the condition (\ref{average_convergence}) for the distributed quantum observer. 

Now we require
\begin{eqnarray*}
\lefteqn{A_o\left[\begin{array}{l}\alpha\\\alpha\\\vdots\\\alpha\end{array}\right]+B_o\|\alpha\|^2}\nonumber \\
&=&
2\left[\begin{array}{c}
\omega_1J\alpha-\mu_2J\alpha\|\alpha\|^2-\mu_1J\alpha\|\alpha\|^2\\
-\mu_2J \alpha\|\alpha\|^2+ \omega_2J\alpha- \mu_3J\alpha\|\alpha\|^2\\ 
-\mu_3J\alpha\|\alpha\|^2+\omega_3J\alpha-\mu_4J\alpha \|\alpha\|^2\\
\vdots\\
-\mu_NJ \alpha\|\alpha\|^2+\omega_NJ\alpha
\end{array}\right]\nonumber \\
&=& 0.
\end{eqnarray*}
This will be satisfied if and only if 
\[
\left[\begin{array}{c}
\omega_1-\mu_2\|\alpha\|^2-\mu_1\|\alpha\|^2\\
-\mu_2\|\alpha\|^2+ \omega_2- \mu_3\|\alpha\|^2\\ 
-\mu_3\|\alpha\|^2+ \omega_3-\mu_4 \|\alpha\|^2\\
\vdots\\
-\mu_N\|\alpha\|^2+\omega_N
\end{array}\right] = 0.
\]
That is, we require that
\begin{equation}
\label{mui}
\omega_i=(\mu_i+\mu_{i+1})\|\alpha\|^2
\end{equation}
for $i=1,2,\ldots,N-1$ and
\begin{equation}
\label{muN}
\omega_N=\mu_N\|\alpha\|^2. 
\end{equation}

To show that the above candidate distributed quantum observer leads to the satisfaction of the condition (\ref{average_convergence}), we first note that $\tilde x_o$ defined in (\ref{tildexo}) will satisfy (\ref{xtildedot}). If we can show that 
\begin{equation}
\label{xtildeav}
\lim_{T \rightarrow \infty} \frac{1}{T}\int_{0}^{T}\tilde x_o(t)dt = 0
\end{equation}
 then it will follow from (\ref{Coxtilde}) and (\ref{xtilde}) that (\ref{average_convergence}) is satisfied. In order to establish (\ref{xtildeav}), we first note that we can write
\[
A_o = 2\Theta R_o
\]
where
\begin{small}
\begin{eqnarray*}
\lefteqn{R_o = }\nonumber \\
&\left[\begin{array}{rrrrr}\omega_1I & -\mu_2\alpha\alpha^T & & &\\
 -\mu_2\alpha\alpha^T & \omega_2I & -\mu_3\alpha\alpha^T&  & 0 \\
&  -\mu_3\alpha\alpha^T & \omega_3I & \ddots &\\
0 & & \ddots & \ddots  & -\mu_N\alpha\alpha^T\\
&&&  -\mu_N\alpha\alpha^T & \omega_NI
\end{array}\right].&
\end{eqnarray*}
\end{small}
We will now show that the symmetric matrix $R_o$ is positive-definite.

\begin{lemma}
\label{L1}
The matrix $R_o$ is positive definite.
\end{lemma}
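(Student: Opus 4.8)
The plan is to prove the stronger statement $\xi\trp R_o\xi>0$ for every nonzero $\xi\in\rbb^{n_o}$ directly, exploiting the rank-one structure of the off-diagonal blocks $-\mu_{i+1}\alpha\alpha\trp$ together with the relations (\ref{mui})--(\ref{muN}). Partition $\xi=[\,\xi_1\trp\ \cdots\ \xi_N\trp\,]\trp$ conformably with $R_o$, so each $\xi_i\in\rbb^2$, and set $a_i=\alpha\trp\xi_i$. Decomposing $\xi_i=\frac{a_i}{\|\alpha\|^2}\alpha+\xi_i^\perp$ with $\xi_i^\perp\perp\alpha$ gives $\|\xi_i\|^2=\frac{a_i^2}{\|\alpha\|^2}+r_i$, where $r_i:=\|\xi_i^\perp\|^2\ge0$. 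Expanding the block-tridiagonal quadratic form of $R_o$ and using $\xi_i\trp\alpha\alpha\trp\xi_{i+1}=a_ia_{i+1}$,
\[
\xi\trp R_o\xi=\sum_{i=1}^N\omega_i\|\xi_i\|^2-2\sum_{i=1}^{N-1}\mu_{i+1}a_ia_{i+1}
=\Bigl(\sum_{i=1}^N\frac{\omega_i}{\|\alpha\|^2}\,a_i^2-2\sum_{i=1}^{N-1}\mu_{i+1}a_ia_{i+1}\Bigr)+\sum_{i=1}^N\omega_i r_i .
\]

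Next I would substitute $\omega_i/\|\alpha\|^2=\mu_i+\mu_{i+1}$ for $i<N$ and $\omega_N/\|\alpha\|^2=\mu_N$ from (\ref{mui})--(\ref{muN}), turning the first bracket into the quadratic form $a\trp M a$ of the $N\times N$ symmetric tridiagonal matrix $M$ with diagonal $(\mu_1+\mu_2,\ \mu_2+\mu_3,\ \ldots,\ \mu_{N-1}+\mu_N,\ \mu_N)$ and with $M_{i,i+1}=M_{i+1,i}=-\mu_{i+1}$. The crucial algebraic step is the sum-of-squares identity
\[
a\trp M a=\mu_1 a_1^2+\sum_{i=1}^{N-1}\mu_{i+1}(a_i-a_{i+1})^2 ,
\]
equivalently $M=\mu_1 e_1e_1\trp+\sum_{i=1}^{N-1}\mu_{i+1}(e_i-e_{i+1})(e_i-e_{i+1})\trp$ --- a rank-one ``grounding'' term plus the weighted Laplacian of a path with edge weights $\mu_2,\dots,\mu_N$ --- which is checked by matching coefficients. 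Since all $\mu_i>0$ and all $\omega_i>0$, the resulting expression
\[
\xi\trp R_o\xi=\mu_1 a_1^2+\sum_{i=1}^{N-1}\mu_{i+1}(a_i-a_{i+1})^2+\sum_{i=1}^N\omega_i r_i
\]
is a sum of nonnegative terms, so $R_o\succeq0$.

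To upgrade this to positive definiteness I would examine the equality case: $\xi\trp R_o\xi=0$ forces $r_i=0$ for all $i$ (so each $\xi_i=\tfrac{a_i}{\|\alpha\|^2}\alpha$), $a_1=0$, and $a_i=a_{i+1}$ for $i=1,\dots,N-1$; chaining the last two yields $a_1=\cdots=a_N=0$, hence $\xi=0$. Thus $R_o$ is positive definite. I do not anticipate a real obstacle; the only care needed is the bookkeeping forced by the rank-one blocks $\alpha\alpha\trp$ --- splitting each $\xi_i$ into its $\alpha$-parallel and $\alpha$-orthogonal parts before invoking (\ref{mui})--(\ref{muN}) --- and then correctly telescoping the squares in $a\trp Ma$. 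Alternatively one could diagonalize in an orthonormal basis $\{u,w\}$ of $\rbb^2$ with $u=\alpha/\|\alpha\|$, writing $R_o=(\|\alpha\|^2M)\otimes uu\trp+\diag(\omega_1,\dots,\omega_N)\otimes ww\trp$ and noting both Kronecker summands are positive definite, or apply Sylvester's criterion to the principal minors of $M$; but the sum-of-squares route is the most transparent.
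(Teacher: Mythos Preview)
Your proof is correct. Both your argument and the paper's ultimately reduce to showing that the same $N\times N$ tridiagonal matrix is positive definite (your $M$ is the paper's $\tilde R_o/\|\alpha\|^2$), and both establish this via the same splitting into a rank-one grounding term plus the weighted Laplacian of a path. The technical difference is in how the $2N\times 2N$ problem is collapsed to an $N\times N$ one: the paper applies the Cauchy--Schwarz inequality to each cross term $\xi_i\trp\alpha\,\alpha\trp\xi_{i+1}$, obtaining only the bound $\xi\trp R_o\xi\ge\check\xi\trp\tilde R_o\check\xi$ with $\check\xi=(\|\xi_1\|,\ldots,\|\xi_N\|)\trp$, whereas your orthogonal split $\xi_i=\tfrac{a_i}{\|\alpha\|^2}\alpha+\xi_i^\perp$ produces the exact identity $\xi\trp R_o\xi=a\trp M a+\sum_i\omega_i\|\xi_i^\perp\|^2$. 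Your route is therefore sharper and entirely self-contained, since the explicit sum of squares $a\trp Ma=\mu_1a_1^2+\sum_{i}\mu_{i+1}(a_i-a_{i+1})^2$ certifies $M\succ0$ without appealing to graph theory; the paper's route, in exchange, names $\tilde R_{o2}$ as a graph Laplacian and invokes connectedness of the path, which is precisely what motivates the extension in Remark~\ref{R3} to more general connected coupling graphs.
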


\begin{proof}
In order to establish this lemma, let 
\[
x_o =  \left[\begin{array}{l}x_{o1}\\x_{o2}\\\vdots\\x_{oN}\end{array}\right]
\]
be a non-zero real vector. Then
\begin{eqnarray*}
x_o^TR_ox_o &=& \omega_1\|x_{o1}\|^2-2\mu_2x_{o1}^T\alpha x_{o2}^T\alpha+\omega_2\|x_{o2}\|^2\nonumber \\
&& -2 \mu_3 x_{o2}^T\alpha x_{o3}^T\alpha+\omega_3\|x_{o3}\|^2\nonumber \\
&& \vdots \nonumber \\
&&-2 \mu_N x_{oN-1}^T\alpha x_{oN}^T\alpha+\omega_N\|x_{oN}\|^2\nonumber \\
&\geq& \omega_1\|x_{o1}\|^2-2\mu_2\|x_{o1}\|\|x_{o2}\|\|\alpha\|^2+\omega_2\|x_{o2}\|^2\nonumber \\
&& -2 \mu_3 \|x_{o2}|\|x_{o3}\|\|\alpha\|^2+\omega_3\|x_{o3}\|^2\nonumber \\
&& \vdots \nonumber \\
&&-2 \mu_N \|x_{oN-1}\|\|x_{oN}\|\|\alpha\|^2+\omega_N\|x_{oN}\|^2
\end{eqnarray*}
using the Cauchy-Schwarz inequality. 
Hence,
\begin{eqnarray}
\label{Roineq}
x_o^TR_ox_o&\geq& \omega_1\|x_{o1}\|^2-2\tilde \mu_2\|x_{o1}\|\|x_{o2}\|+\omega_2\|x_{o2}\|^2\nonumber \\
&& -2 \tilde \mu_3 \|x_{o2}|\|x_{o3}\|+\omega_3\|x_{o3}\|^2\nonumber \\
&& \vdots \nonumber \\
&&-2 \tilde \mu_N \|x_{oN-1}\|\|x_{oN}\|+\omega_N\|x_{oN}\|^2
\end{eqnarray}
where
\begin{equation}
\label{tildeomegamu}
\tilde \mu_i =\mu_i \|\alpha\|^2
\end{equation}
for $i=1,2,\ldots,N$. 
% Hence, it follows from (\ref{mu2i}) that
% \[
% \tilde \mu_{2i} = \tilde \mu_{2i-1} = \frac{N}{2}+1-i
% \]
% for $i=1,2,\ldots,\frac{N}{2}$. 
Thus, (\ref{Roineq}) implies
\[
x_o^TR_ox_o \geq \check x_o^T\tilde R_o\check x_o
\]
where
\[
\check x_o = \left[\begin{array}{c}\|x_{o1}\|\\\|x_{o2}\|\\\vdots\\\|x_{oN}\|\end{array}\right]
\]
and 
\[
\tilde R_o = 
\left[\begin{array}{rrrrr}\omega_1 & -\tilde \mu_2 & & &\\
 -\tilde \mu_2 & \omega_2 & -\tilde \mu_3&  & 0 \\
&  -\tilde \mu_3 & \omega_3 & \ddots &\\
0 & & \ddots & \ddots  & -\tilde \mu_N\\
&&&  -\tilde \mu_N & \omega_N
\end{array}\right].
\]

Now the vector $\check x_o$ will be non-zero if and only if the vector $x_o$ is non-zero. Hence, the matrix $R_o$ will be positive-definite if we can show that the matrix $\tilde R_o$ is positive-definite. In order to establish this fact, we first note that (\ref{mui}), (\ref{muN}) and (\ref{tildeomegamu}) imply that
\[
\omega_i=\tilde \mu_i+\tilde \mu_{i+1}
\]
for $i=1,2,\ldots,N-1$ and
\[
\omega_N=\tilde \mu_N.
\]
Hence, we can write
\begin{eqnarray*}
\tilde R_o &=& 
\left[\begin{array}{rrrrr}\tilde \mu_1+\tilde \mu_{2}  & -\tilde \mu_2 & & &\\
 -\tilde \mu_2 & \tilde \mu_2+\tilde \mu_{3} & -\tilde \mu_3&  & 0 \\
&  -\tilde \mu_3 & \tilde \mu_3+\tilde \mu_{4} & \ddots &\\
0 & & \ddots & \ddots  & -\tilde \mu_N\\
&&&  -\tilde \mu_N & \tilde \mu_N
\end{array}\right] \nonumber \\
&=& \tilde R_{o1} + \tilde R_{o2}
\end{eqnarray*}
where
\[
\tilde R_{o1} = \left[\begin{array}{rrrr}
\tilde \mu_1  & 0 &\ldots  & 0\\
0 & 0  &\ldots  & 0\\
\vdots & & & \vdots\\
0 & 0  &\ldots  & 0
\end{array}\right] \geq 0
\]
and
\[
\tilde R_{o2} = \left[\begin{array}{rrrrr}\tilde \mu_{2}  & -\tilde \mu_2 & & &\\
 -\tilde \mu_2 & \tilde \mu_2+\tilde \mu_{3} & -\tilde \mu_3&  & 0 \\
&  -\tilde \mu_3 & \tilde \mu_3+\tilde \mu_{4} & \ddots &\\
0 & & \ddots & \ddots  & -\tilde \mu_N\\
&&&  -\tilde \mu_N & \tilde \mu_N
\end{array}\right].
\]

Now the matrix $\tilde R_{o2}$ is the Laplacian matrix for the weighted graph shown in Figure \ref{F4}.
\begin{figure}[htbp]
\begin{center}
\includegraphics[width=4cm]{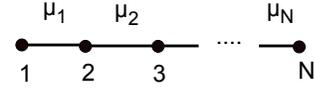}
\end{center}
\caption{Underlying weighted graph for distributed quantum observer. This corresponds to the observer structure shown in Figure \ref{F3}.}
\label{F4}
\end{figure}

Since this graph is a connected graph, it follows that the matrix $\tilde R_{o2}$ is positive-semidefinite with null space equal to 
\[
\mathcal{N}(\tilde R_{o2}) = \mbox{span}\{\left[\begin{array}{l}1\\1\\\vdots\\1\end{array}\right]\}.
\]
The fact that $\tilde R_{o1} \geq 0$ and $\tilde R_{o2} \geq 0$ implies that $\tilde R_{o} \geq 0$. In order to show that $\tilde R_{o} > 0$, suppose that $x$ is a non-zero vector in $\mathcal{N}(\tilde R_{o})$. It follows that 
\[
x^T\tilde R_{o}x = x^T\tilde R_{o1}x+x^T\tilde R_{o2}x = 0.
\]
Since $\tilde R_{o1} \geq 0$ and $\tilde R_{o2} \geq 0$, $x$ must be contained in the null space of $\tilde R_{o1}$ and the null space of $\tilde R_{o2}$. Therefore $x$ must be of the form
\[
x = \gamma \left[\begin{array}{l}1\\1\\\vdots\\1\end{array}\right]
\]
where $\gamma \neq 0$. However, then
\[
x^T\tilde R_{o1}x = \gamma^2 \tilde \mu_1 \neq 0
\]
and hence $x$ cannot be in the null space of $\tilde R_{o1}$. Thus, we can conclude that the matrix $\tilde R_{o}$ is positive definite and hence, the matrix  $R_{o}$ is positive definite. This completes the proof of the lemma. 
\end{proof}

We now verify that the condition (\ref{average_convergence}) is satisfied for the distributed  quantum observer under consideration. We recall from Remark \ref{R1} that the quantity $\half \tilde x_o(t)\trp R_o \tilde x_o(t)$
remains constant in time for the linear system:
\[
\dot{\tilde x}_o = A_o\tilde x_o= 2\Theta R_o \tilde x_o.
\]
That is 
\begin{equation}
\label{Roconst}
\half \tilde x_o(t) \trp R_o \tilde x_o(t) = \half \tilde x_o(0) \trp R_o \tilde x_o(0) \quad \forall t \geq 0.
\end{equation}
However, $\tilde x_o(t) = e^{2\Theta R_ot}\tilde x_o(0)$ and $R_o > 0$. Therefore, it follows from (\ref{Roconst}) that
\[
\sqrt{\lambda_{min}(R_o)}\|e^{2\Theta R_ot}\tilde x_o(0)\| \leq \sqrt{\lambda_{max}(R_o)}\|\tilde x_o(0)\|
\]
for all $\tilde x_o(0)$ and $t \geq 0$. Hence, 
\begin{equation}
\label{exp_bound}
\|e^{2\Theta R_ot}\| \leq \sqrt{\frac{\lambda_{max}(R_o)}{\lambda_{min}(R_o)}}
\end{equation}
for all $t \geq 0$.

Now since $\Theta $ and $R_o$ are non-singular,
\[
\int_0^Te^{2\Theta R_ot}dt = \half e^{2\Theta R_oT}R_o^{-1}\Theta ^{-1} - \half R_o^{-1}\Theta ^{-1}
\]
and therefore, it follows from (\ref{exp_bound}) that
\begin{eqnarray*}
\lefteqn{\frac{1}{T} \|\int_0^Te^{2\Theta R_ot}dt\|}\nonumber \\
 &=& \frac{1}{T} \|\frac{1}{2}e^{2\Theta R_oT}R_o^{-1}\Theta ^{-1} - \frac{1}{2}R_o^{-1}\Theta ^{-1}\|\nonumber \\
&\leq& \frac{1}{2T}\|e^{2\Theta R_oT}\|\|R_o^{-1}\Theta ^{-1}\| \nonumber \\
&&+ \frac{1}{2T}\|R_o^{-1}\Theta ^{-1}\|\nonumber \\
&\leq&\frac{1}{2T}\sqrt{\frac{\lambda_{max}(R_o)}{\lambda_{min}(R_o)}}\|R_o^{-1}\Theta ^{-1}\|\nonumber \\
&&+\frac{1}{2T}\|R_o^{-1}\Theta ^{-1}\|\nonumber \\
&\rightarrow & 0 
\end{eqnarray*}
as $T \rightarrow \infty$. Hence,  
\begin{eqnarray*}
\lefteqn{\lim_{T \rightarrow \infty} \frac{1}{T}\|\int_{0}^{T} \tilde x_o(t)dt\| }\nonumber \\
&=& \lim_{T \rightarrow \infty}\frac{1}{T}\|\int_{0}^{T} e^{2\Theta R_ot}\tilde x_o(0)dt\| \nonumber \\
&\leq& \lim_{T \rightarrow \infty}\frac{1}{T} \|\int_{0}^{T} e^{2\Theta R_ot}dt\|\|\tilde x_o(0)\|\nonumber \\
&=& 0.
\end{eqnarray*}
This implies
\[
\lim_{T \rightarrow \infty} \frac{1}{T}\int_{0}^{T} \tilde x_o(t)dt = 0
\]
and hence, it follows from (\ref{tildexo}) and (\ref{Coxtilde}) that
\[
\lim_{T \rightarrow \infty} \frac{1}{T}\int_{0}^{T} z_o(t)dt = \left[\begin{array}{l}1\\1\\\vdots\\1\end{array}\right]z_p.
\]

Also, (\ref{zp_const}) implies 
\[
\lim_{T \rightarrow \infty} \frac{1}{T}\int_{0}^{T} z_p(t)dt = \left[\begin{array}{l}1\\1\\\vdots\\1\end{array}\right]z_p.
\]
Therefore, condition (\ref{average_convergence}) is satisfied. Thus, we have established the following theorem.

\begin{theorem}
\label{T1}
Consider a quantum plant of the form (\ref{plant}) where  $A_p = 0$. Then  the matrices $R_{oi}>0$, $R_{ci}$, $C_{oi}$, $i = 1,2,...N$ given as in 
(\ref{RciRoi}), (\ref{alphabeta}), (\ref{mui}), (\ref{muN})  will define a distributed direct coupled quantum observer achieving time-averaged consensus convergence for this quantum plant.
\end{theorem}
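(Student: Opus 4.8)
The plan is to assemble the ingredients already developed in this section into a single argument, following Definition \ref{D1}. First I would verify that the plant output is frozen. Since $A_p=0$ and $\Theta=\diag(J,J,\dots)$, the block structure of $R_a$ in (\ref{xaRa}) gives $\dot x_p(t)=2JR_{c1}x_{o1}(t)=2J\alpha_1\beta_1^Tx_{o1}(t)$, so that $\dot z_p(t)=2C_pJ\alpha_1\beta_1^Tx_{o1}(t)=2\alpha_1^TJ\alpha_1\beta_1^Tx_{o1}(t)=0$, using $\alpha_1=C_p^T$ and the skew-symmetry of $J$. Hence $z_p(t)\equiv z_p$ as in (\ref{zp_const}), and the first of the two time averages in (\ref{average_convergence}) is trivially $[\begin{array}{lll}1&\cdots&1\end{array}]^Tz_p$.

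Next I would rewrite the observer block of (\ref{augmented_system}) as the affine system $\dot x_o=A_ox_o+B_oz_p$, and check that the parameter choices (\ref{alphabeta}), (\ref{mui}), (\ref{muN}) are exactly those making the constant vector $\frac{1}{\|\alpha\|^2}[\begin{array}{lll}\alpha^T&\cdots&\alpha^T\end{array}]^Tz_p$ an equilibrium, i.e.\ that (\ref{xtilde}) holds; this is the routine computation collapsing the block identity into the scalar recursions $\omega_i=(\mu_i+\mu_{i+1})\|\alpha\|^2$ and $\omega_N=\mu_N\|\alpha\|^2$. Consequently the shifted variable $\tilde x_o$ of (\ref{tildexo}) obeys the homogeneous equation (\ref{xtildedot}), $\dot{\tilde x}_o=A_o\tilde x_o=2\Theta R_o\tilde x_o$, with $R_o$ the symmetric tridiagonal matrix displayed above.

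The heart of the matter is then the time-average estimate (\ref{xtildeav}). Here I invoke Lemma \ref{L1}, so $R_o>0$; note $A_o=2\Theta R_o$ is similar to a skew-symmetric matrix, so it has purely imaginary spectrum and $e^{2\Theta R_ot}$ does not decay — the estimate must come from conservation, not stability. By Remark \ref{R1} the quadratic form $\half\tilde x_o(t)^TR_o\tilde x_o(t)$ is constant along (\ref{xtildedot}); sandwiching it between $\lambda_{min}(R_o)$ and $\lambda_{max}(R_o)$ yields the uniform bound $\|e^{2\Theta R_ot}\|\le\sqrt{\lambda_{max}(R_o)/\lambda_{min}(R_o)}$ for all $t\ge0$, as in (\ref{exp_bound}). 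Since $\Theta$ and $R_o$ are nonsingular, $\int_0^Te^{2\Theta R_ot}\,dt=\half(e^{2\Theta R_oT}-I)R_o^{-1}\Theta^{-1}$, so dividing by $T$ and applying the bound gives $\frac1T\|\int_0^Te^{2\Theta R_ot}\,dt\|\to0$; evaluated on $\tilde x_o(t)=e^{2\Theta R_ot}\tilde x_o(0)$ this is (\ref{xtildeav}).

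Finally I would combine the pieces: from (\ref{tildexo}) and the identity (\ref{Coxtilde}) we have $z_o(t)=C_o\tilde x_o(t)+[\begin{array}{lll}1&\cdots&1\end{array}]^Tz_p$, so (\ref{xtildeav}) forces the time average of $z_o(t)$ to converge to $[\begin{array}{lll}1&\cdots&1\end{array}]^Tz_p$, which together with (\ref{zp_const}) establishes (\ref{average_convergence}); by Definition \ref{D1} the stated matrices then define a distributed direct coupled quantum observer achieving time-averaged consensus convergence. The only genuinely substantive steps are Lemma \ref{L1} — the positivity of $R_o$, obtained by a Cauchy--Schwarz reduction to the scalar tridiagonal matrix $\tilde R_o$ and its splitting into a rank-one term plus the Laplacian of a connected path graph — and the conservation-law bound on the matrix exponential; everything else is bookkeeping. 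I expect the positivity of $R_o$ to be the main obstacle, since it is where the connectivity of the observer chain (and the coupling $\tilde\mu_1$ to the plant) is actually used.
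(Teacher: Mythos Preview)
Your proposal is correct and follows essentially the same approach as the paper: freeze $z_p$, pass to the shifted variable $\tilde x_o$ via the equilibrium condition (\ref{xtilde}), invoke Lemma~\ref{L1} for $R_o>0$, use the conservation law of Remark~\ref{R1} to get the uniform bound (\ref{exp_bound}) on $e^{2\Theta R_ot}$, integrate explicitly using the nonsingularity of $\Theta R_o$, and then read off (\ref{average_convergence}) from (\ref{tildexo}) and (\ref{Coxtilde}). Every step you outline, including the identification of Lemma~\ref{L1} and the conservation bound as the only substantive ingredients, matches the paper's argument.
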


\begin{remark}
\label{R2}
The distributed quantum observer constructed above is determined by choice of the positive parameters $\tilde \mu_1, \tilde \mu_2,\ldots,\tilde \mu_N$. A number of possible choices for these parameters could be considered. One choice is to choose all of these parameters to be the same as $\tilde \mu_i = \omega_0$ for $i=1,2,\ldots,N$ where $\omega_0 > 0$ is a frequency parameter. This choice will mean that all of the oscillator frequencies in the distributed observer, except for the last one, will be the same, $\omega_i = 2\omega_0$ for $i=1,2,\ldots,N-1$ and $ \omega_N = \omega_0$. In order to have distinct oscillator frequencies in the distributed observer,  we can choose $\tilde \mu_i = i\omega_0$ for $i=1,2,\ldots,N$. This would yield $ \omega_i = (2i+1)\omega_0$ for $i=1,2,\ldots,N-1$ and $\omega_N = N\omega_0$. This means that only odd harmonics of the fundamental frequency $\omega_0$ are used. Alternatively, in order to obtain both odd and even harmonics of the fundamental frequency $\omega_0$, for the case in which $N$ is even, we can choose 
\begin{equation}
\label{mu2i}
\tilde \mu_{2i} = \tilde \mu_{2i-1} = \omega_0(\frac{N}{2}+1-i) > 0
\end{equation}
for $i=1,2,\ldots,\frac{N}{2}$. This leads to 
\begin{equation}
\label{omegai}
\omega_i = \omega_0(N+1-i)
\end{equation}
for $i=1,2,\ldots,N$. A similar choice can be derived for the case in which $N$ is odd. 

Another possible approach is to choose the parameters $\tilde \mu_1, \tilde \mu_2,\ldots,\tilde \mu_N$  randomly with a uniform distribution on $[0,\omega_0N]$. 
\end{remark}

\begin{remark}
\label{R3}
The proof of the above theorem relies an a graph theoretic argument used in the proof of Lemma \ref{L1}. This motivates a possible extension of the result in which the distributed direct coupled observer corresponding to the weighted graph in Figure \ref{F4} is replaced by a more general distributed direct coupled observer corresponding to a more general weighted graph.
\end{remark}

\section{Illustrative Example}
We now present some numerical simulations to illustrate the direct coupled distributed quantum observer described in the previous section. We choose the quantum plant to have $A_p = 0$ and $C_p = [1~~0]$. That is, the variable to be estimated by the quantum observer is the position operator of the quantum plant; i.e., $z_p(t) = q_p(t)$ where $x_p(t) = \left[\begin{array}{l}q_p(t)\\p_p(t)\end{array}\right]$. For the distributed quantum observer, we choose $N=5$ so that the distributed quantum observer has five elements. Also, as discussed in Remark \ref{R2}, we choose the parameters $\tilde \mu_1, \tilde \mu_2,\ldots,\tilde \mu_N$ so that $\tilde \mu_i = i\omega_0$ for $i=1,2,\ldots,N$ where $\omega_0 =1$. Then the corresponding distributed quantum observer is defined by equations (\ref{RciRoi}), (\ref{alphabeta}), (\ref{mui}), (\ref{muN}). 

 The augmented plant-observer system is described by the equations (\ref{augmented_system}), (\ref{xaRa}). Then, we can write
\[
x_a(t) = 
\Phi(t) x_a(0)
\]
where 
\[
\Phi(t)
= e^{A_a t}.
\]
Thus, the plant variable to be estimated $z_p(t)$ is given by
\begin{eqnarray*}
z_p(t) &=& e_1C_a\Phi(t) x_a(0) \nonumber \\
&=& \sum_{i=1}^{2N+2}e_1C_a\Phi_i(t) x_{ai}(0)
\end{eqnarray*}
where 
\[
C_a = \left[\begin{array}{ll}C_p & 0 \\0 & C_o\end{array}\right],
\]
$e_1$ is the first unit vector in the standard basis for $\rbb^{N+1}$, $\Phi_i(t)$ is the $i$th column of the matrix $\Phi(t)$ and
$x_{ai}(0)$ is the $i$th component of the vector $x_a(0)$.  We plot each of the quantities  
$e_1C_a\Phi_1(t),e_1C_a\Phi_2(t),\ldots,e_1C_a\Phi_{2N+2}(t) $ in Figure \ref{F5}(a). 
\begin{figure}%
\centering
\subfloat[][]{\includegraphics[width=3.9cm]{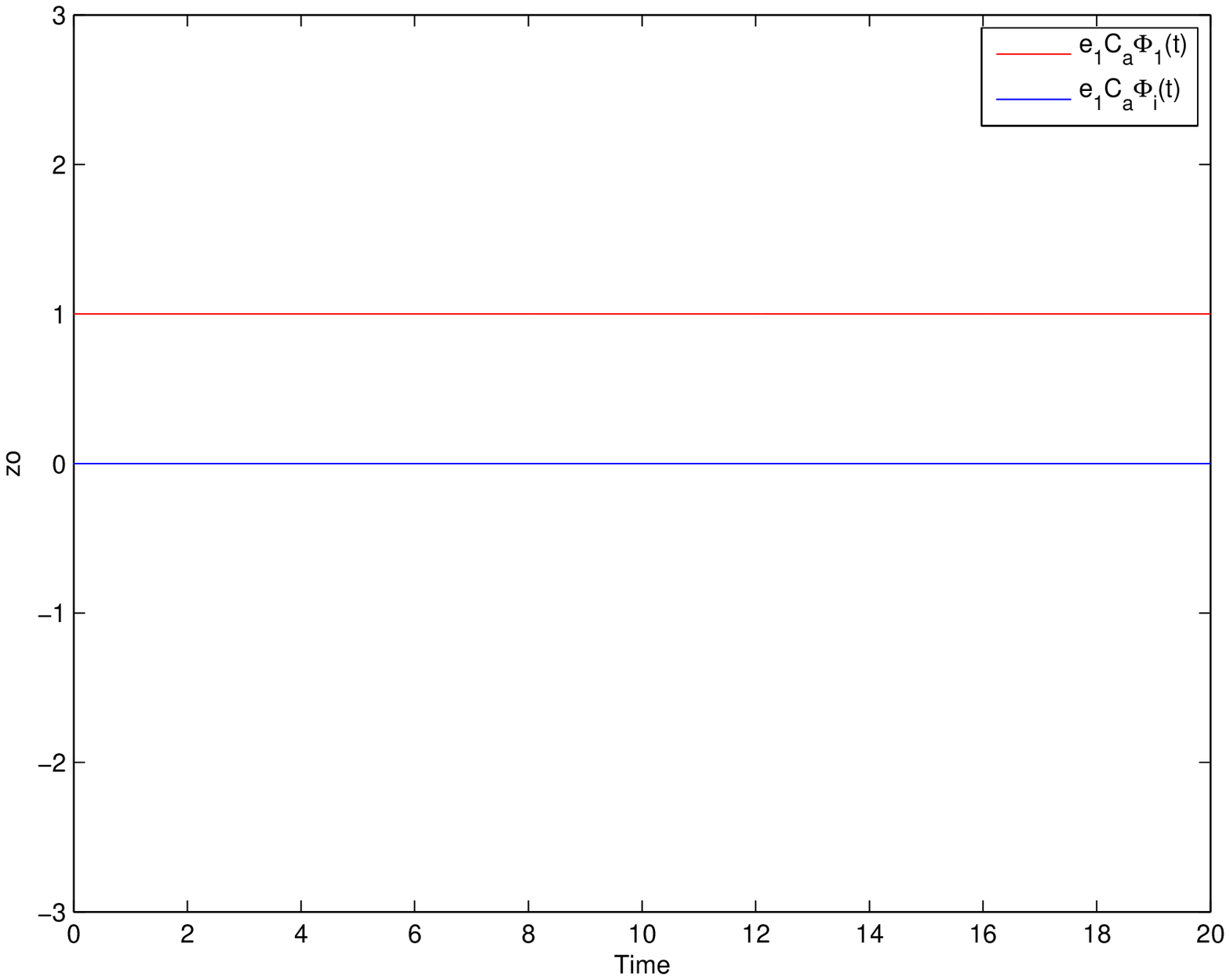}}%
\qquad
\subfloat[][]{\includegraphics[width=3.9cm]{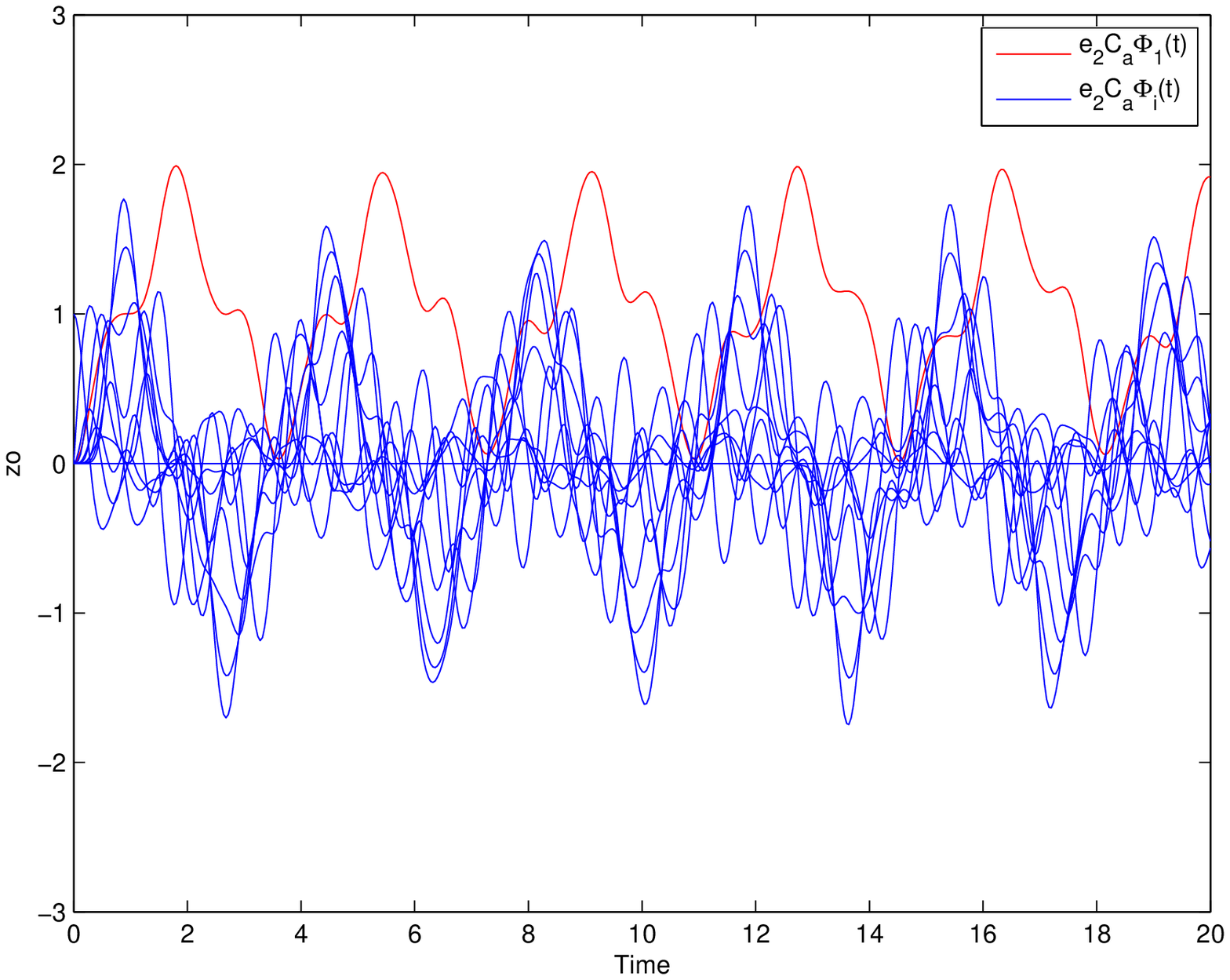}}\\
\subfloat[][]{\includegraphics[width=3.9cm]{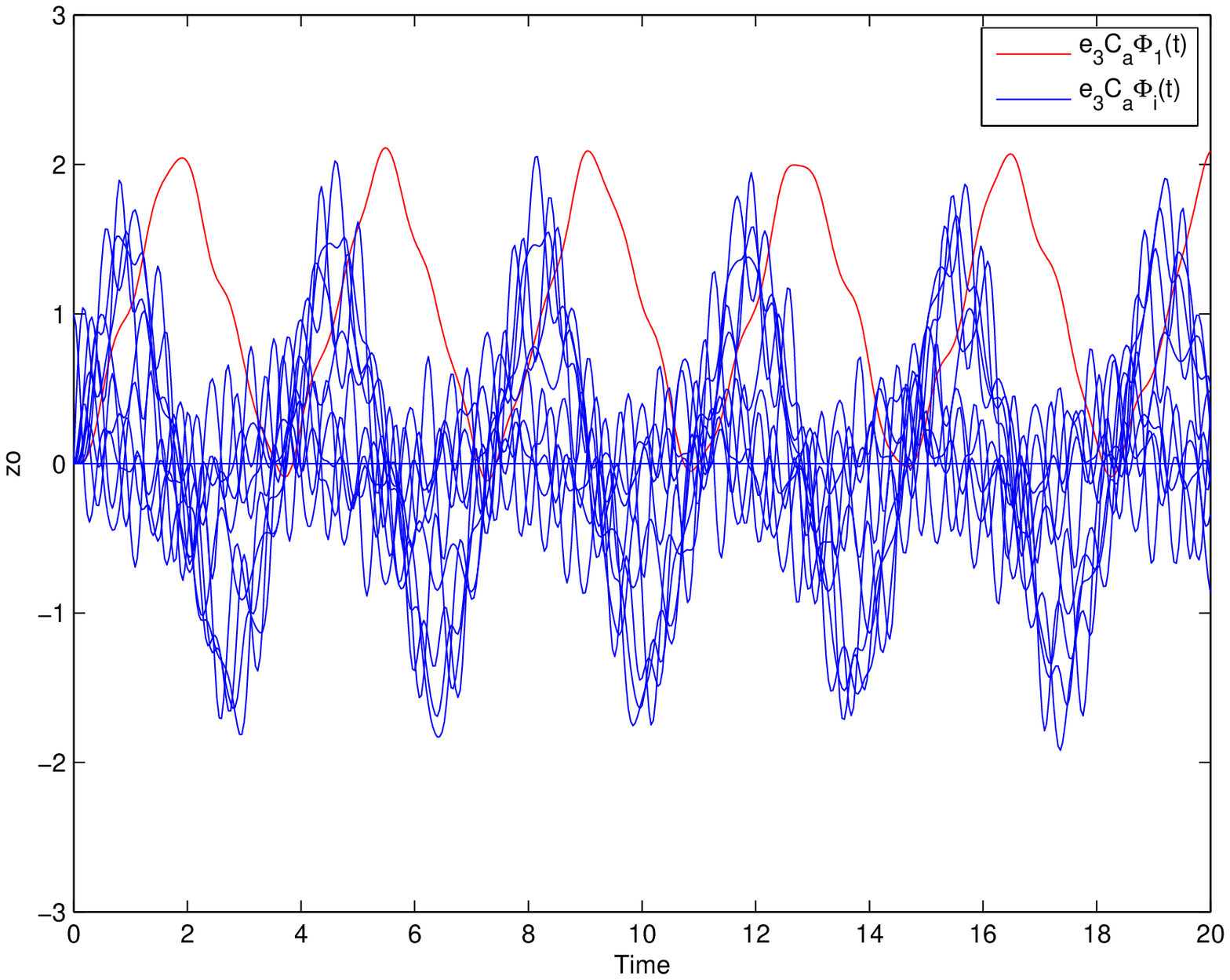}}%
\qquad
\subfloat[][]{\includegraphics[width=3.9cm]{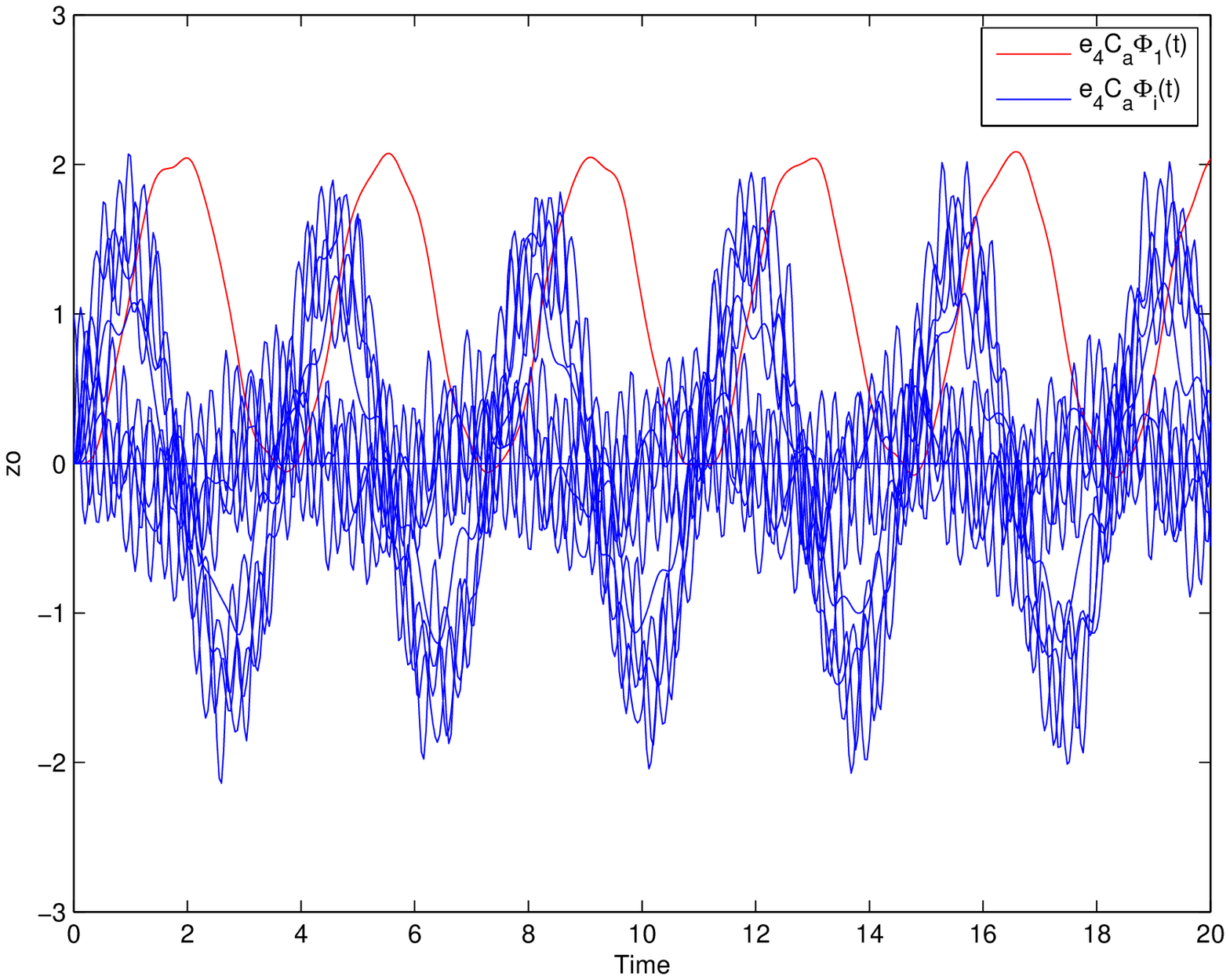}}\\
\subfloat[][]{\includegraphics[width=3.9cm]{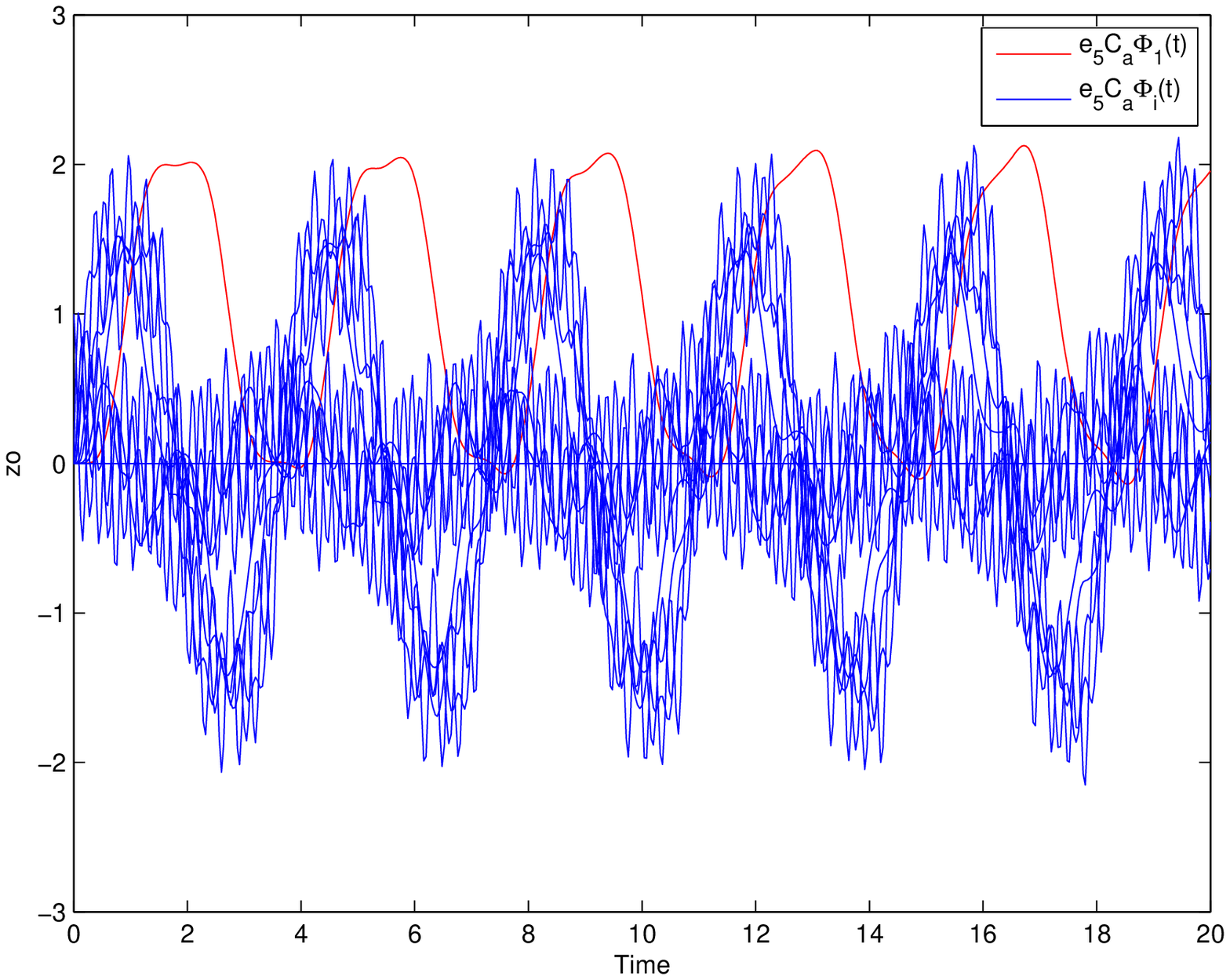}}%
\qquad
\subfloat[][]{\includegraphics[width=3.9cm]{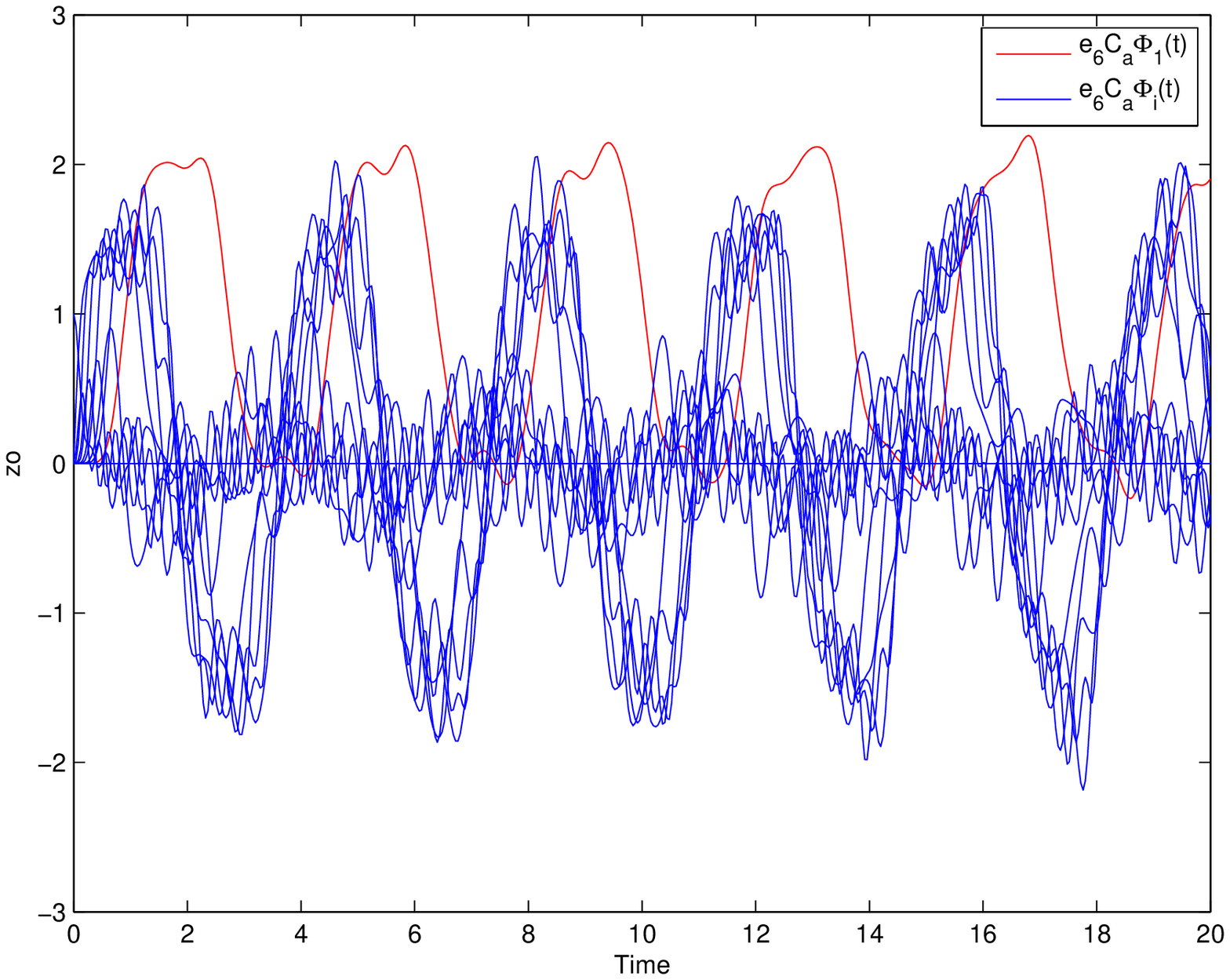}}\\
\caption{Coefficients defining (a) $z_p(t)$, (b) $z_{o1}(t)$, (c) $z_{o2}(t)$, (d) $z_{o3}(t)$, (e) $z_{o4}(t)$, and (f) $z_{o5}(t)$.}%
\label{F5}%
\end{figure}

% \begin{figure}[htbp]
% \begin{center}
% \includegraphics[width=4cm]{F5.eps}
% \end{center}
% \caption{Coefficients defining $z_p(t)$.}
% \label{F5}
% \end{figure}
%Constructed using F5.m
From this figure, we can see that $e_1C_a\Phi_1(t)\equiv 1$ and  $e_1C_a\Phi_{2}(t)\equiv 0$, $e_1C_a\Phi_{2}(t)\equiv 0$, $\ldots$, $e_1C_a\Phi_{2N+2}(t)\equiv 0$, and $z_p(t)$ will remain constant at $z_p(0)$ for all $t \geq 0$.

We now consider the output variables of the distributed quantum observer $z_{oi}(t)$ for $i=1,2,\ldots,N$ which are given by
\[
z_{oi}(t) = \sum_{j=1}^{2N+2}e_{i+1}C_a\Phi_j(t) x_{aj}(0)
\]
where $e_{i+1}$ is the $(i+1)$th unit vector in the standard basis for $\rbb^{N+1}$. We plot each of the quantities  
$e_{i+1}C_a\Phi_1(t),e_{i+1}C_a\Phi_2(t),\ldots,e_{i+1}C_a\Phi_{2N+2}(t)$ in Figures  \ref{F5}(b) - \ref{F5}(f).
% \begin{figure}[htbp]
% \begin{center}
% \includegraphics[width=4cm]{F6.eps}
% \end{center}
% \caption{Coefficients defining $z_{o1}$.}
% \label{F6}
% \end{figure}
%Constructed using F5.m

% \begin{figure}[htbp]
% \begin{center}
% \includegraphics[width=4cm]{F7.eps}
% \end{center}
% \caption{Coefficients defining $z_{o2}$.}
% \label{F7}
% \end{figure}
% %Constructed using F5.m

% \begin{figure}[htbp]
% \begin{center}
% \includegraphics[width=4cm]{F8.eps}
% \end{center}
% \caption{Coefficients defining $z_{o3}$.}
% \label{F8}
% \end{figure}
% %Constructed using F5.m

% \begin{figure}[htbp]
% \begin{center}
% \includegraphics[width=4cm]{F9.eps}
% \end{center}
% \caption{Coefficients defining $z_{o4}$.}
% \label{F9}
% \end{figure}
% %Constructed using F5.m

% \begin{figure}[htbp]
% \begin{center}
% \includegraphics[width=4cm]{F10.eps}
% \end{center}
% \caption{Coefficients defining $z_{o5}$.}
% \label{F10}
% \end{figure}
% %Constructed using F5.m

Also, we can consider the spatial average obtained by averaging over each of the distributed observer outputs:
\[
z_{os}(t) = \frac{1}{N}\sum_{i=1}^Nz_{oi}(t) =\frac{1}{N}\sum_{i=1}^N\sum_{j=1}^{2N+2}e_{i+1}C_a\Phi_j(t) x_{aj}(0). 
\]
Then we plot each of the quantities 
$\frac{1}{N}\sum_{i=1}^Ne_{i+1}C_a\Phi_1(t)$, $\frac{1}{N}\sum_{i=1}^Ne_{i+1}C_a\Phi_2(t)$, $\ldots$ , $\frac{1}{N}\sum_{i=1}^Ne_{i+1}C_a\Phi_{2N+2}(t)$
in Figure \ref{F6}. 

\begin{figure}[htbp]
\begin{center}
\includegraphics[width=7cm]{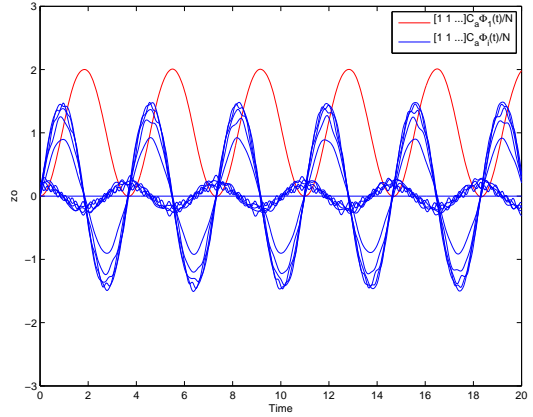}
\end{center}
\caption{Coefficients defining $z_{os}(t)$.}
\label{F6}
\end{figure}
%Constructed using F11.m

To illustrate the time average convergence property of the quantum observer (\ref{average_convergence}), we now plot the quantities
$\frac{1}{T}\int_0^Te_{i+1}C_a\Phi_1(t)dt$, $\frac{1}{T}\int_0^Te_{i+1}C_a\Phi_2(t)dt$, $\ldots$, $\frac{1}{T}\int_0^Te_{i+1}C_a\Phi_{2N+2}(t)dt$
for $i=1,2,\ldots,N$ in Figures \ref{F7}(a)-\ref{F7}(e). These quantities determine the averaged value of the $i$th observer output
\[
z_{oi}^{ave}(T) = \frac{1}{T}\int_0^T\sum_{j=1}^{2N+2}e_{i+1}C_a\Phi_j(t) x_{aj}(0)dt
\]
for $i=1,2,\ldots,N$. 
\begin{figure}%
\centering
\subfloat[][]{\includegraphics[width=3.9cm]{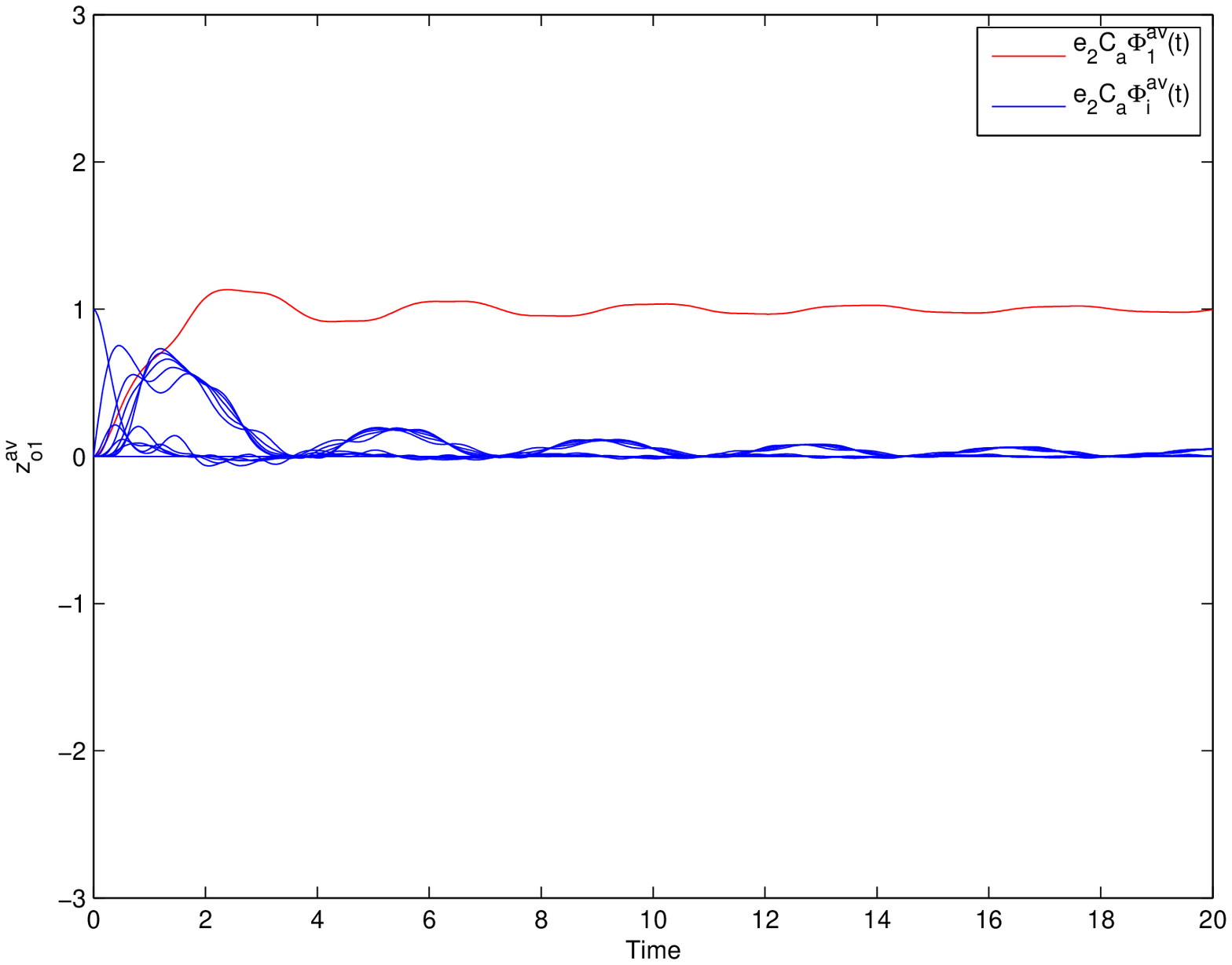}}%
\qquad
\subfloat[][]{\includegraphics[width=3.9cm]{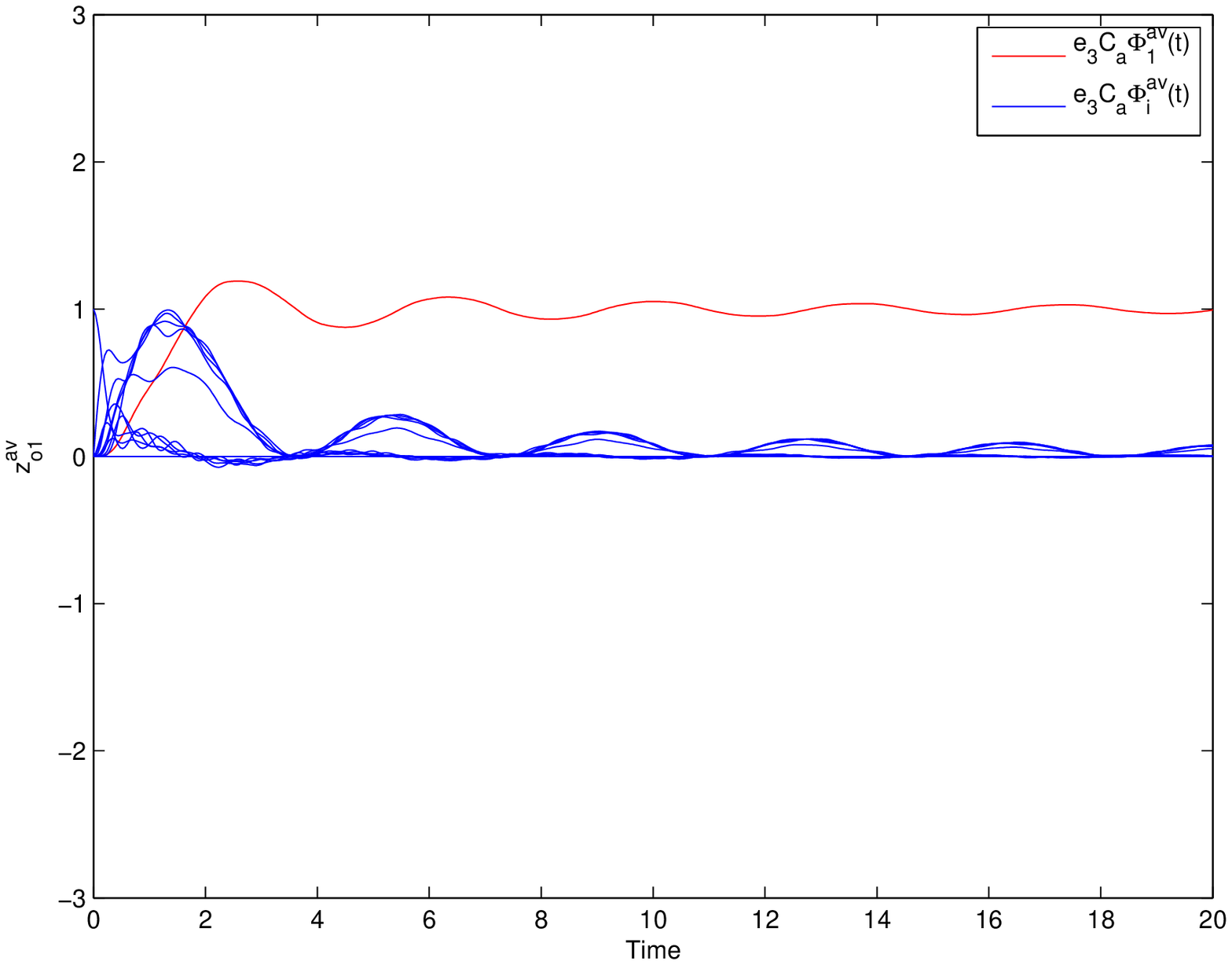}}\\
\subfloat[][]{\includegraphics[width=3.9cm]{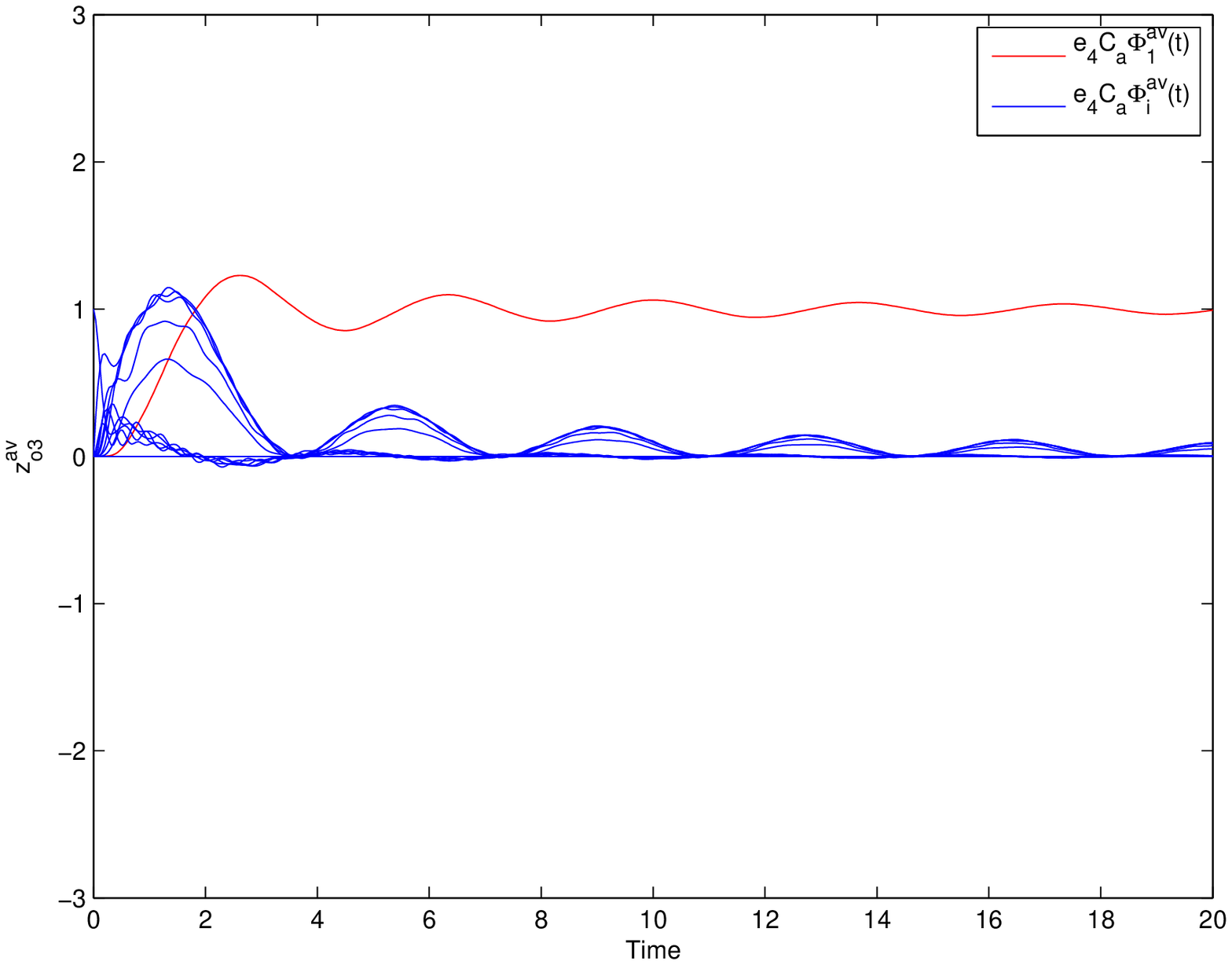}}%
\qquad
\subfloat[][]{\includegraphics[width=3.9cm]{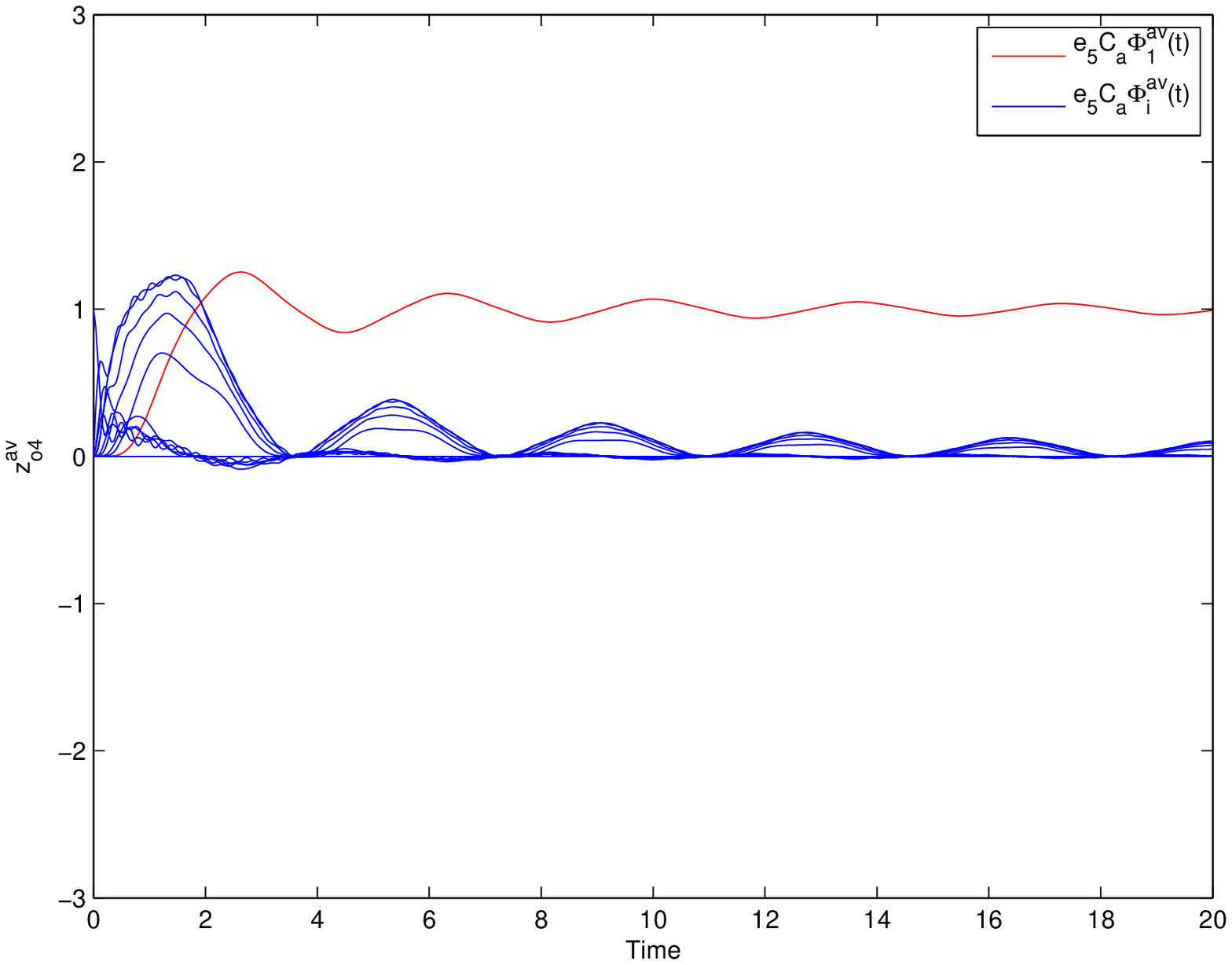}}\\
\subfloat[][]{\includegraphics[width=7cm]{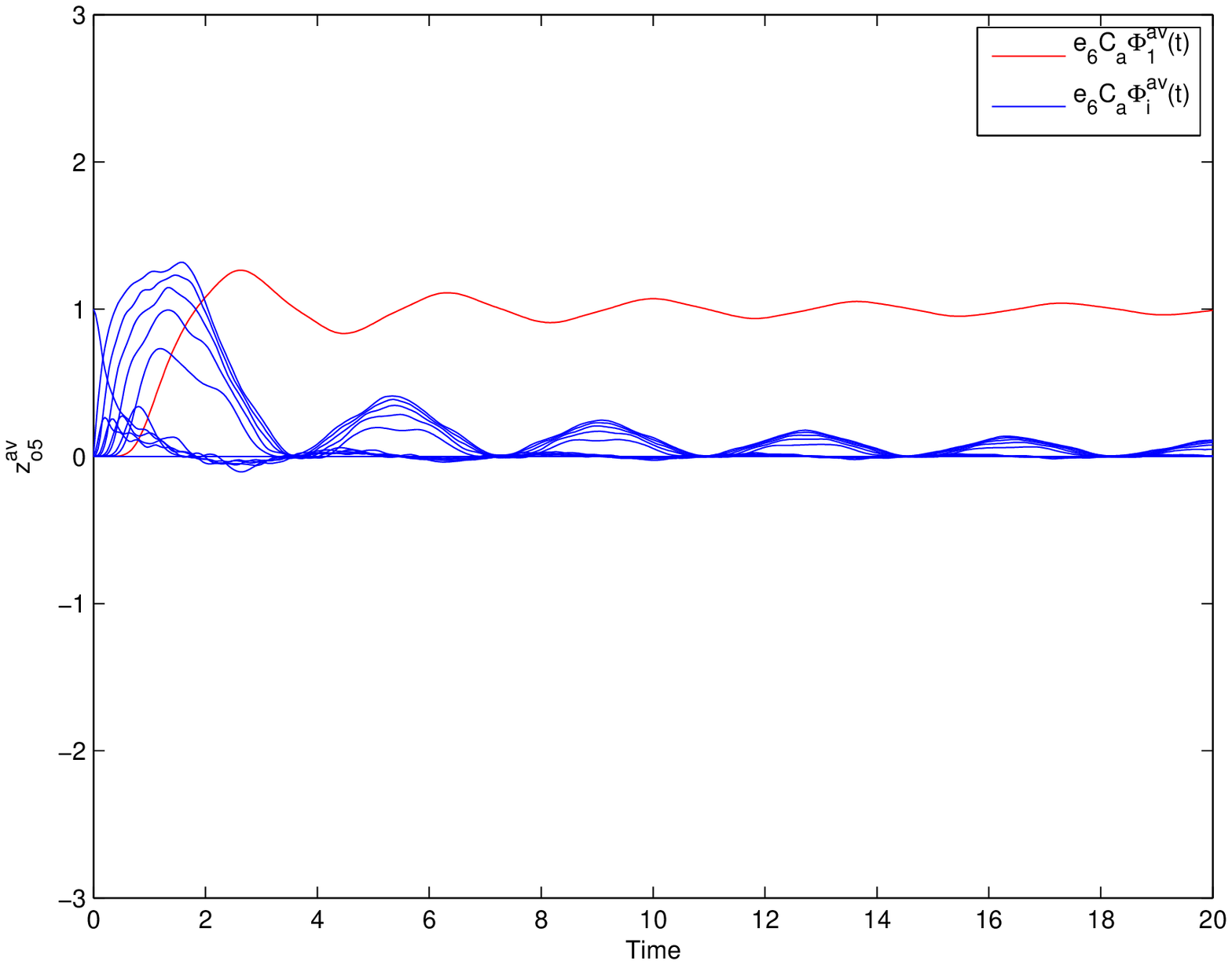}}
\caption{Coefficients defining the time average of (a)  $z_{o1}(t)$, (b) $z_{o2}(t)$, (c) $z_{o3}(t)$, (d) $z_{o4}(t)$, and (e) $z_{o5}(t)$.}%
\label{F7}%
 \end{figure}

% \begin{figure}[htbp]
% \begin{center}
% \includegraphics[width=4cm]{F12.eps}
% \end{center}
% \caption{Coefficient functions defining the time average of  $z_{o1}$.}
% \label{F12}
% \end{figure}

% \begin{figure}[htbp]
% \begin{center}
% \includegraphics[width=4cm]{F13.eps}
% \end{center}
% \caption{Coefficient functions defining the time average of  $z_{o2}$.}
% \label{F13}
% \end{figure}

% \begin{figure}[htbp]
% \begin{center}
% \includegraphics[width=4cm]{F14.eps}
% \end{center}
% \caption{Coefficient functions defining the time average of  $z_{o3}$.}
% \label{F14}
% \end{figure}

% \begin{figure}[htbp]
% \begin{center}
% \includegraphics[width=4cm]{F15.eps}
% \end{center}
% \caption{Coefficient functions defining the time average of  $z_{o4}$.}
% \label{F15}
% \end{figure}

% \begin{figure}[htbp]
% \begin{center}
% \includegraphics[width=4cm]{F16.eps}
% \end{center}
% \caption{Coefficient functions defining the time average of  $z_{o5}$.}
% \label{F16}
% \end{figure}

%Constructed using F12.m
From these figures, we can see that for each $i=1,2,\ldots,N$, the time average of $z_{oi}(t)$  converges to $z_p(0)$ as $t \rightarrow \infty$. That is, the distributed quantum observer reaches a time averaged consensus corresponding to the output of the quantum plant which is to be estimated. 

\section{Conclusions}
In this paper we have considered the construction of a distributed direct coupling observer for a closed quantum linear system in order to achieve a time averaged consensus convergence.  We have also presented an illustrative example along with simulations to investigate the consensus behavior of the distributed direct coupling observer.
% \bibliography{/home/irp/Bibliog/irpnew}
% %\bibliography{/home/s8504138/Bibliog/irpnew}  
% \bibliographystyle{IEEEtran}

\end{document}